\newtheorem{theorem}{Theorem}[section]
\newtheorem{lemma}[theorem]{Lemma}
\theoremstyle{definition}
\newtheorem{remark}{Remark}[section]
\newtheorem{definition}{Definition}[section]
\numberwithin{equation}{section}
\numberwithin{figure}{section}
\begin{document}
	\title{\Large Signal Encryption Strategy Based on Domain Change of the 
	Fractional Fourier Transform}
	
	\author{Wei Chen$^{1,2}$, Zunwei Fu$^1$, Loukas Grafakos$^3$, Yue Wu$^1$}
	\date{\small\it
		$^1$ School of Mathematics and Statistics, Linyi University, Linyi 
		276000, China\\
		$^2$ College of Information Technology, The University of Suwon, 
		Hwaseong-si 18323, South Korea\\
		$^3$ Department of Mathematics, University of Missouri, Columbia MO 
		65211, USA
	}
	
	\maketitle
	
	\begin{abstract}
	This paper provides a double encryption algorithm that uses the lack 
	of invertibility of the fractional Fourier transform (FRFT) on $L^{1}$. One 
	encryption key is a function, which maps a ``good" $L^{2}$-signal to a 
	``bad" 	$L^{1}$-signal. The FRFT parameter which describes the rotation 
	associated 	with this operator on the time-frequency plane provides the 
	other encryption key. With the help of approximate identities, such as of 
	the Abel and Gauss means of the FRFT established in \cite{CFGW}, we recover 
	the encrypted signal on the FRFT domain. This design of an encryption 
	algorithm seems new even  when using the classical Fourier transform. 
	Finally, the feasibility of the new strategy is verified by simulation and 
	audio examples.		
	
	Key words: fractional Fourier transform, signal	encryption, 
	approximate identities
	\end{abstract}

\section{Introduction}

\label{sect:intro}

In view of the rapid development of communication and multimedia technology,
the acquisition, transmission and processing of private data is of paramount
importance in our digital era. Alongside with the development of digital
technology, security concerns arise as a major undertaking. The implementation
of data encryption and related algorithm research are key pillars to solving
these problems. In the past decade, scholars have applied this research in
various fields, such as telemedicine, patient data secure filing, voice
communication, satellite transmission signal, and so on; on this we refer to
\cite{RanPh,chao,RegShEn, JntTranCor,jigsaw,iter,Ozak} and references therein.

In the last two decades, the fractional Fourier transform (FRFT) has been
widely favored in engineering, in view of its free parameters, its suitability
in dealing with rotations in the time-frequency plane, and its convenience in
algorithm adaptation; in fact, a number of algorithms have been established
based on the FRFT. Moreover, the FRFT has found applications in aspects of
research, such as artificial neural network, wavelet transform, time-frequency
analysis, time-varying filtering, complex transmission and so on (see, e.g.,
\cite{S2011,T2010,D2001,N2003,M2002,Y2003}). Moreover, it has also been used
in partial differential equations (cf., \cite{namias,Lohmann93}), quantum
mechanics (cf., \cite{namias,PhysRevLett72}), diffraction theory and optical
transmission (cf., \cite{Ozak}), optical system and optical signal processing
(cf., \cite{Bernardo94,Ozak,Liu97}), optical image processing (cf.,
\cite{Lohmann93,Liu97}), etc. Earlier theoretical aspects of the FRFT can be
found in \cite{namias,MK1987,Kerr1988,KF1988,Lohmann93,zayed}.

The FRFT can be interpreted as a form of the Fourier transform which
incorporates a rotation of the coordinate axis on which the original signal is
defined; this rotation is given counterclockwise about the origin in the
time-frequency plane. Theoretically, we introduce the FRFT as an operator on
$L^{1}(\mathbb{R})$ as follows: For $u\in L^{1}(\mathbb{R})$ and $\alpha
\in\mathbb{R}$, the fractional Fourier transform of order $\alpha$ of
$u$ is defined by
\begin{equation}
(\mathcal{F}_{\alpha}u)(x)=\left\{
\begin{array}
[c]{ll}%
\int_{-\infty}^{+\infty}K_{\alpha}(x,t)u(t)\,\mathrm{d}t, & \alpha\neq
n\pi,\quad n\in\mathbb{N},\\
u(x), & \alpha=2n\pi,\\
u(-x), & \alpha=(2n+1)\pi,
\end{array}
\right.  \label{eq:fa}%
\end{equation}
where
\[
K_{\alpha}(x,t)=A_{\alpha}\exp\left[  2\pi i\left(  \frac{t^{2}}{2}\cot
\alpha-xt\csc\alpha+\frac{x^{2}}{2}\cot\alpha\right)  \right]
\]
is the kernel of FRFT and
\begin{equation}
A_{\alpha}=\sqrt{1-i\cot\alpha}. \label{defAa}%
\end{equation}
It is obvious that when $\alpha=\pi/2$, the FRFT reduces to the ordinary
Fourier transform, that is, $\mathcal{F}_{\pi/2}=\mathcal{F}$. Recall that the
Fourier transform of $u$ defined as%
\begin{equation}
(\mathcal{F}u)(x)=\int_{-\infty}^{+\infty}u(t)e^{-2\pi ixt}\mathrm{d}t.
\label{eq:FT}%
\end{equation}
If the Fourier transform of a signal is another signal that lives on an
axis perpendicular to the original signal time axis, the $\alpha$th FRFT of a
signal lives on the counterclockwise rotation by the angle $\alpha$ of the
original signal time axis.

To the best of our knowledge, researchers have only applied the $L^{2}$ theory
of FRFT and did not consider an $L^{1}$ theory. As the FRFT of $L^{1}$-signal,
$\mathcal{F}_{\alpha}u$, may not be integrable in general, we cannot invert
the FRFT on $L^{1}$.
In \cite{CFGW}, we developed the harmonic analysis theoretical background that
addresses FRFT inversion issues of an $L^{1}$-signal via summability
techniques. Based on our earlier research and inspired by the above
literature, in this paper, we use this approximate inversion to study the
signal encryption from a new perspective.

\section{Mathematical background}

\label{sect:pre}

In this section, we recall properties of FRFT and analyze their numerical
adaptation. Define the chirp operator $\mathcal{M}_{\alpha}$ acting on
functions $\phi$ in $L^{1}(\mathbb{R})$ as follows:
\[
\mathcal{M}_{\alpha}\phi(x)=e^{\pi ix^{2}\cot\alpha}\phi(x).
\]
For $\alpha\neq n\pi$, let $A_{\alpha}$ be as in (\ref{defAa}). Then the FRFT
of $u\in L^{1}(\mathbb{R})$ can be written as
\begin{align}
(\mathcal{F}_{\alpha}u)(x)  &  =A_{\alpha}e^{i\pi x^{2}\cot\alpha}%
(\mathcal{F}e^{i\pi(\cdot)^{2}\cot\alpha}u)(x\csc\alpha)\nonumber\\
&  =A_{\alpha}\mathcal{M}_{\alpha}(\mathcal{FM}_{\alpha}u)(x\csc\alpha).
\label{eq:def}%
\end{align}
In view of (\ref{eq:def}), we see that the FRFT of a signal $u(t)$ can be
decomposed into four simpler operators, according to the diagram of Fig.
\ref{fig:com}:

\begin{enumerate}
[(i)]

\item chirp modulation, $v(t)=e^{\pi it^{2}\cot\alpha}u(t)$;

\item Fourier transform, $\hat{v}(x)=(\mathcal{F}v)(x)$;

\item scaling, $\tilde{v}(x)=\hat{v}(x\csc\alpha)$;

\item chirp modulation, $(\mathcal{F}_{\alpha}u)(x)=A_{\alpha}e^{\pi
ix^{2}\cot\alpha}\tilde{v}(x)$.
\end{enumerate}

\begin{figure}[t]
\centering
\begin{tikzpicture}[thick]
		\node (start){$u(t)$};
		\node[node distance=15mm, inner sep=0pt, right of=start] 
		(MA){$\bigotimes$};
\node[node distance=15mm, below of =MA] (p1){{$e^{i\pi t^2 cot\alpha}$}};
		\node[node distance=18mm, rectangle,draw,right of=MA] (FT){FT};
		\node[node distance=25mm, rectangle,draw,right of=FT] (SC){Scaling};
		\node[node distance=25mm, inner sep=0pt,right of=SC] 
		(MA2){$\bigotimes$};
\node[node distance=15mm, below of =MA2] (p2){$A_\alpha e^{i\pi x^2 cot\alpha}$};
		\node[node distance=21mm, right of=MA2] (end){$(\mathcal F _\alpha 
		u)(x)$};
		\draw[->](start)--(MA);
		\draw[->](MA)--node[above]{$v(t)$}(FT);
		\draw[->](FT)--node[above]{$\hat{v}(x)$}(SC);
		\draw[->](SC)--node[above]{$\tilde{v}(x)$}(MA2);
		\draw[->](MA2)--(end);
		\draw[->](p1)--(MA);
		\draw[->] (p2) -- (MA2);
		\end{tikzpicture}
\caption{The decomposition of the FRFT.}%
\label{fig:com}%
\end{figure}
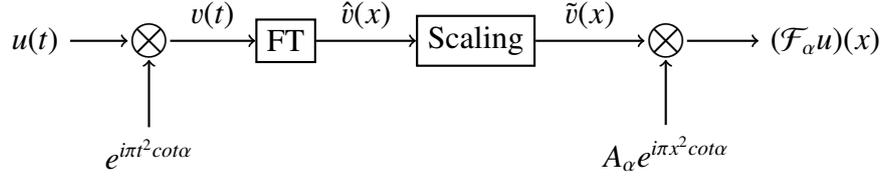

Generally speaking, the FRFT was studied on the Schwartz space $S(\mathbb{R})$
or on $L^{2}(\mathbb{R})$. Thanks to topological properties of these spaces,
the operator $\mathcal{F}_{\alpha}$ on $S(\mathbb{R})$ or $L^{2}(\mathbb{R})$
is unitary, invertible (with inverse transform $\mathcal{F}_{-\alpha}$) and
forms an additive group (i.e., $\mathcal{F}_{\alpha_{1}}\mathcal{F}%
_{\alpha_{2}}=\mathcal{F}_{\alpha_{1}+\alpha_{2}}$). However, many common
functions do not belong to $S(\mathbb{R})$ or $L^{2}(\mathbb{R})$. For
example, the following useful signal on the line
\begin{equation}
	u(t)=\left\{
	\begin{array}
		[c]{cc}%
		ne^{-i\pi x^{2}\cot\alpha}, & n\leq t<n+\frac{1}{n^{3}},\\
		0, & \text{otherwise.}%
	\end{array}
	\right.  \label{fun:1}%
\end{equation}
lies in $L^{1}(\mathbb{R})$ but not in $L^{2}(\mathbb{R})$. Using
(\ref{eq:fa}), we calculate the FRFT of this function:
\[
\left(  \mathcal{F}_{\alpha}u\right)  (x)=\frac{A_{\alpha}e^{i\pi x^{2}\cot\alpha}}{2\pi 
ix\csc\alpha}\sum\limits_{n=1}^{\infty}ne^{-2n\pi
	ix\csc\alpha}\left(
1-e^{-\frac{2\pi ix\csc\alpha}{n^{3}}}\right)  ,
\]
where $A_{\alpha}$ is as in (\ref{defAa}).

In the $L^{1}(\mathbb{R})$ setting, problems of convergence arise when
studying FRFT inversion. Given the FRFT of an $L^{1}$-signal of in fractional
Fourier domain, how to recover it back into time domain to be the original
signal? We naturally hope that%
\begin{equation}
(\mathcal{F}_{-\alpha}\mathcal{F}_{\alpha}u)(t)=\int_{-\infty}^{+\infty
}(\mathcal{F}_{\alpha}u)(x)K_{-\alpha}(x,t)\mathrm{d}x=u(t) \label{eq:in frft}%
\end{equation}
Unfortunately, when $u$ is integrable, one may not necessarily have that
$\mathcal{F}_{\alpha}u$ is integrable, so the integral in (\ref{eq:in frft})
might not make sense. In fact, $\mathcal{F}_{\alpha}u$ is nonintegrable in
general (cf., \cite[pp. 12]{Duo2001}). For example, let%
\begin{equation}
u(t)=e^{-\pi it^{2}}\left\vert t\right\vert ^{-1/2}\mathrm{rect}(t).
\label{eq:u}%
\end{equation}
Then $u\in L^{1}(\mathbb{R})$ but
\begin{equation}
U(x):=\left(  \mathcal{F}_{\pi/4}u\right)  (x)=2^{3/4}A_{\alpha}e^{i\pi x^{2}%
}\frac{C\left(  \sqrt{\frac{2^{3/2}}{\pi}\left\vert x\right\vert }\,\,\right)
}{\sqrt{\left\vert x\right\vert }} \label{eq:Fu}%
\end{equation}
and $U\notin L^{1}(\mathbb{R})$.
Then $\mathcal{F}_{-\alpha}\mathcal{F}_{\alpha}u$ does not make sense. Here
$\mathrm{rect}(t)$ is the rectangle function on the line
defined by
\[\mathrm{rect}(t)=\left\{
\begin{array}
[c]{cc}1, & \left\vert t\right\vert \leq1,\\
0, & \left\vert t\right\vert >1,
\end{array}
\right.  \]
and $\operatorname{C}(x)$ is the Fresnel cosine integral function
defined by \textrm{C}$(x)=\int_{0}^{x}\cos\frac{\pi s^{2}}{2}$\textrm{d}$s$.

In order to overcome the difficulty of non-integrability and recover the
original signal, we adopt the idea of inversion via summability means,
established in our earlier work \cite{CFGW}.

\begin{definition}
\label{def:M_Phi}Given $\Phi\in C_{0}(\mathbb{R})$ and $\Phi(0)=1$, a function
$u$, and $\varepsilon>0$ we define
\[
M_{\varepsilon,\Phi_{\alpha}}(u):=\int_{-\infty}^{+\infty}(\mathcal{F}%
_{\alpha}u)(x)K_{-\alpha}(x,\cdot)\Phi_{\alpha}(\varepsilon x)\mathrm{d}x,
\]
where
\[
\Phi_{\alpha}\left(  x\right)  :=\Phi\left(  x\csc\alpha\right)  .
\]
The expressions $M_{\varepsilon,\Phi_{\alpha}}(u)$ (with varying $\varepsilon
$) are called the $\Phi_{\alpha}$\emph{ means of the fractional Fourier
integral of }$u$.
\end{definition}

The following results concern FRFT approximate identities.

\begin{theorem}
\label{th:L1}If $\Phi,\mathcal{F}\Phi\in L^{1}(\mathbb{R})$ and $\left\Vert
\mathcal{F}\Phi\right\Vert _{L^{1}}=1$, then the $\Phi_{\alpha}$ means of the
fractional Fourier integral of $u$ are convergent to $u$ in the sense of
$L^{1}$ norm, that is,
\[
\lim_{\varepsilon\rightarrow0}\left\Vert M_{\varepsilon,\Phi_{\alpha}%
}(u)-u\right\Vert _{L^{1}}=0.
\]

\end{theorem}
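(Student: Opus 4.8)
The plan is to reduce the statement to a standard approximate-identity argument for the ordinary Fourier transform, exploiting the factorization $(\mathcal{F}_{\alpha}u)(x)=A_{\alpha}\mathcal{M}_{\alpha}(\mathcal{FM}_{\alpha}u)(x\csc\alpha)$ from \eqref{eq:def}. First I would unwind the definition of $M_{\varepsilon,\Phi_{\alpha}}(u)$ by substituting this factorization for $\mathcal{F}_{\alpha}u$ and writing $K_{-\alpha}(x,t)=\overline{K_{\alpha}(x,t)}$ (up to the chirp and amplitude factors), so that the chirp modulations $\mathcal{M}_{\alpha}$ telescope: the $e^{i\pi x^{2}\cot\alpha}$ coming from $\mathcal{F}_{\alpha}u$ cancels against the $e^{-i\pi x^{2}\cot\alpha}$ inside $K_{-\alpha}(x,t)$, and $|A_{\alpha}|^{2}=|1-i\cot\alpha|$ combines with the $\csc\alpha$ Jacobian from the change of variables $y=x\csc\alpha$. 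The net effect should be
\[
M_{\varepsilon,\Phi_{\alpha}}(u)(t)=e^{-i\pi t^{2}\cot\alpha}\,\bigl(\mathcal{M}_{\alpha}u * \psi_{\varepsilon}\bigr)(t\,\text{(appropriately)}),
\]
i.e. $M_{\varepsilon,\Phi_{\alpha}}(u)=\mathcal{M}_{\alpha}^{-1}\bigl(S_{\varepsilon,\Phi}(\mathcal{M}_{\alpha}u)\bigr)$, where $S_{\varepsilon,\Phi}$ is the classical $\Phi$-summability operator for the Fourier inversion integral, namely $S_{\varepsilon,\Phi}(f)(t)=\int (\mathcal{F}f)(\xi)\Phi(\varepsilon\xi)e^{2\pi i t\xi}\,\mathrm{d}\xi = (f * (\mathcal{F}\Phi)_{\varepsilon})(t)$ with $(\mathcal{F}\Phi)_{\varepsilon}(t)=\varepsilon^{-1}(\mathcal{F}\Phi)(t/\varepsilon)$. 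Carrying out this bookkeeping carefully — keeping track of all the $\csc\alpha$'s and sign conventions — is the one place where a slip is easy, so I would do it explicitly rather than waving hands.

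Once the identity $M_{\varepsilon,\Phi_{\alpha}}(u)=\mathcal{M}_{\alpha}^{-1}S_{\varepsilon,\Phi}(\mathcal{M}_{\alpha}u)$ is established, the rest is routine. Since $\mathcal{M}_{\alpha}$ is multiplication by a unimodular function, it is an isometry on $L^{1}(\mathbb{R})$, so
\[
\bigl\|M_{\varepsilon,\Phi_{\alpha}}(u)-u\bigr\|_{L^{1}}
=\bigl\|S_{\varepsilon,\Phi}(\mathcal{M}_{\alpha}u)-\mathcal{M}_{\alpha}u\bigr\|_{L^{1}}
=\bigl\|(\mathcal{M}_{\alpha}u)*(\mathcal{F}\Phi)_{\varepsilon}-\mathcal{M}_{\alpha}u\bigr\|_{L^{1}}.
\]
Now $(\mathcal{F}\Phi)_{\varepsilon}$ is an approximate identity in the classical sense: the hypothesis $\mathcal{F}\Phi\in L^{1}$ gives the uniform bound $\|(\mathcal{F}\Phi)_{\varepsilon}\|_{L^{1}}=\|\mathcal{F}\Phi\|_{L^{1}}=1$ (this is exactly why the normalization $\|\mathcal{F}\Phi\|_{L^{1}}=1$ is imposed), and the dilation structure gives the concentration property $\int_{|t|>\delta}|(\mathcal{F}\Phi)_{\varepsilon}(t)|\,\mathrm{d}t\to 0$ as $\varepsilon\to 0$ for each fixed $\delta>0$. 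The extra hypothesis $\Phi\in L^{1}$ is what guarantees $\mathcal{F}\Phi$ is a genuine function (continuous, via Riemann–Lebesgue it is even in $C_0$) so that these manipulations are legitimate, and also legitimizes the interchange of integrals used in the first paragraph (Fubini, since everything in sight is absolutely integrable once $\Phi_{\alpha}(\varepsilon\cdot)\mathcal{F}_{\alpha}u \in L^1$, which follows from $\mathcal{F}_{\alpha}u\in L^\infty$ — recall $u\in L^1$ — times $\Phi(\varepsilon\cdot\csc\alpha)\in L^1$).

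Finally I would invoke the standard theorem that convolution with an $L^{1}$ approximate identity converges in $L^{1}$ norm: split $\mathcal{M}_{\alpha}u = g + (\mathcal{M}_{\alpha}u - g)$ with $g$ continuous of compact support (dense in $L^{1}$), handle the $g$-term by uniform continuity and the bounded-mass property, and control the remainder term using $\|f*(\mathcal{F}\Phi)_{\varepsilon}\|_{L^1}\le\|f\|_{L^1}\|(\mathcal{F}\Phi)_{\varepsilon}\|_{L^1}=\|f\|_{L^1}$. Letting $\varepsilon\to 0$ and then shrinking the approximating parameter gives the conclusion. The main obstacle is genuinely only the first paragraph — verifying that the somewhat baroque-looking operator $M_{\varepsilon,\Phi_{\alpha}}$ collapses, after the chirp cancellations and the $\csc\alpha$ rescaling, to an honest Fourier convolution with $(\mathcal{F}\Phi)_{\varepsilon}$; after that the proof is the classical approximate-identity argument and the hypotheses on $\Phi$ are used precisely where indicated.
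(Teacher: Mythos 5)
Your proposal is correct and is essentially the intended argument: the paper itself states Theorem \ref{th:L1} without proof (deferring to \cite{CFGGW} read \cite{CFGW}), and the proof there proceeds exactly by your reduction --- the chirp factors $e^{\pm i\pi x^{2}\cot\alpha}$ cancel, $A_{\alpha}A_{-\alpha}\lvert\sin\alpha\rvert=1$ absorbs the Jacobian of $y=x\csc\alpha$, and $M_{\varepsilon,\Phi_{\alpha}}(u)=\mathcal{M}_{\alpha}^{-1}\bigl[(\mathcal{M}_{\alpha}u)\ast\theta_{\varepsilon}\bigr]$ with $\theta=(\mathcal{F}\Phi)(-\,\cdot\,)$, after which the classical approximate-identity theorem applies since $\mathcal{M}_{\alpha}$ is an $L^{1}$ isometry. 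The only point worth making explicit is that the convergence to $u$ itself (rather than a constant multiple) uses $\int\theta=\Phi(0)=1$ from Definition \ref{def:M_Phi}, with the hypothesis $\lVert\mathcal{F}\Phi\rVert_{L^{1}}=1$ supplying the uniform mass bound, as you indicate.
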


\begin{theorem}
\label{th:pp} Suppose that $\Phi,\mathcal{F}\Phi$ lie in $L^{1}(\mathbb{R}) $
and that the function $\psi(x)=\underset{\left\vert t\right\vert
\geq\left\vert x\right\vert }{\sup}\left\vert (\mathcal{F}\Phi)\left(
t\right)  \right\vert $ is integrable over the line and $\left\Vert
\mathcal{F}\Phi\right\Vert _{L^{1}}=1$. Then the $\Phi_{\alpha}$ means of the
fractional Fourier integral of $f$ are a.e. convergent to $f$, that is,
\[
M_{\varepsilon,\Phi_{\alpha}}(u)(t)\rightarrow u\left(  t\right)
\]
as $\varepsilon\rightarrow0$ for almost all $t\in\mathbb{R}$.
\end{theorem}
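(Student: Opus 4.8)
The plan is to reduce the almost-everywhere convergence statement for the $\Phi_\alpha$ means of the fractional Fourier integral to the corresponding classical statement for the ordinary Fourier transform, exploiting the decomposition \eqref{eq:def}. Writing $v=\mathcal{M}_\alpha u$ and using that $(\mathcal{F}_\alpha u)(x)=A_\alpha\mathcal{M}_\alpha(\mathcal{F}v)(x\csc\alpha)$, I would substitute this expression together with the explicit kernel $K_{-\alpha}(x,t)$ into the definition of $M_{\varepsilon,\Phi_\alpha}(u)(t)$. The chirp factors $e^{i\pi x^2\cot\alpha}$ coming from $\mathcal{F}_\alpha u$ and from $K_{-\alpha}$ are designed to cancel (this is precisely why $\Phi_\alpha$ is defined with the $\csc\alpha$ dilation), and after the change of variables $x\mapsto x\sin\alpha$ the quantity $M_{\varepsilon,\Phi_\alpha}(u)(t)$ should collapse, up to an overall chirp $e^{i\pi t^2\cot\alpha}$ and the constant $|A_\alpha|^2=|{\csc\alpha}|$ absorbed by the Jacobian, to the classical $\Phi$-means
\[
(\mathcal{F}\Phi(\varepsilon\,\cdot)\,\widehat{v})^{\vee}(t)=\int_{-\infty}^{+\infty}(\mathcal{F}v)(\xi)\,e^{2\pi i t\xi}\,\Phi(\varepsilon\xi)\,\mathrm{d}\xi
\]
evaluated on $v$. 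So the first step is the bookkeeping identity $M_{\varepsilon,\Phi_\alpha}(u)(t)=\mathcal{M}_\alpha\bigl(S_{\varepsilon,\Phi}v\bigr)(t)$, where $S_{\varepsilon,\Phi}$ denotes the classical $\Phi$-summability operator for the Fourier integral.

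Once that identity is in place, the second step is to invoke the classical theorem on a.e.\ convergence of $\Phi$-means of the Fourier integral: if $\Phi,\mathcal{F}\Phi\in L^1$, $\|\mathcal{F}\Phi\|_{L^1}=1$, and $\psi(x)=\sup_{|t|\ge|x|}|(\mathcal{F}\Phi)(t)|$ is integrable, then $S_{\varepsilon,\Phi}g\to g$ a.e.\ for every $g\in L^1(\mathbb{R})$. (This is the standard radial-majorant / approximate-identity argument controlled by the Hardy–Littlewood maximal function; the hypothesis on $\psi$ is exactly what gives the pointwise majorization $|S_{\varepsilon,\Phi}g|\lesssim \|\psi\|_{L^1}\,Mg$ and hence the weak-type bound that upgrades $L^1$-convergence on a dense class to a.e.\ convergence on all of $L^1$.) I would apply this with $g=v=\mathcal{M}_\alpha u$; since $|e^{i\pi t^2\cot\alpha}|=1$, $v\in L^1(\mathbb{R})$ whenever $u\in L^1(\mathbb{R})$, so the hypothesis transfers. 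Note that $\mathcal{F}v\in L^1$ is \emph{not} assumed (and indeed fails in general, as the examples \eqref{fun:1}, \eqref{eq:u} illustrate), so one genuinely needs the summability version rather than naive Fourier inversion — but that is precisely the content of the classical theorem, whose proof needs only $\Phi,\mathcal{F}\Phi\in L^1$ and the majorant condition.

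The third step is to transfer convergence back: from $S_{\varepsilon,\Phi}v(t)\to v(t)$ for a.e.\ $t$, multiply by the unimodular continuous factor $e^{i\pi t^2\cot\alpha}$ to get $M_{\varepsilon,\Phi_\alpha}(u)(t)=e^{i\pi t^2\cot\alpha}S_{\varepsilon,\Phi}v(t)\to e^{i\pi t^2\cot\alpha}v(t)=\mathcal{M}_\alpha v(t)=\mathcal{M}_\alpha\mathcal{M}_\alpha^{-1}u(t)=u(t)$ for a.e.\ $t$, which is the claim. The case $\alpha=n\pi$ is trivial (the FRFT is $\pm$ reflection and $\Phi_\alpha$ degenerates, so it is treated separately or excluded as in \eqref{eq:fa}), so one may assume $\cot\alpha,\csc\alpha$ finite throughout.

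I expect the main obstacle to be the first step: carefully verifying that all the chirp and scaling factors combine exactly as claimed, so that $M_{\varepsilon,\Phi_\alpha}(u)$ really is $\mathcal{M}_\alpha$ applied to the honest classical $\Phi$-mean of $v$ — in particular checking that the dilation by $\csc\alpha$ built into $\Phi_\alpha$ is the one that makes the variable substitution clean and that the constant $A_\alpha A_{-\alpha}=|A_\alpha|^2$ matches the Jacobian $|\csc\alpha|$ from the substitution $x=\xi\sin\alpha$ so that the normalization $\|\mathcal{F}\Phi\|_{L^1}=1$ is preserved. This is a finite computation with no analytic subtlety, but it is where sign and constant errors would hide; everything after it is a direct appeal to the classical a.e.\ convergence theorem for Fourier-integral summability means. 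If one prefers a self-contained argument, the same majorization $|M_{\varepsilon,\Phi_\alpha}(u)(t)|\lesssim \|\psi\|_{L^1}\,M(\mathcal{M}_\alpha u)(t)$ can be established directly from Definition~\ref{def:M_Phi} and then combined with density of, say, Schwartz functions (on which convergence is elementary via Theorem~\ref{th:L1} or direct inspection) to conclude.
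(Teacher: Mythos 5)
Your proposal is correct and follows the route the paper intends: the paper states Theorem~\ref{th:pp} without proof (it is imported from \cite{CFGW}), and reducing to the classical a.e.\ convergence theorem for $\Phi$-means of the Fourier integral (as in \cite{StW,GL}) via the decomposition \eqref{eq:def} is exactly the natural argument, with $A_{\alpha}A_{-\alpha}=|A_{\alpha}|^{2}=|\csc\alpha|$ cancelling the Jacobian $|\sin\alpha|$ of the substitution $\xi=x\csc\alpha$ just as you anticipate. The only slip is the sign of the chirp: $K_{-\alpha}(x,t)$ contributes $e^{-i\pi t^{2}\cot\alpha}$, so the bookkeeping identity is $M_{\varepsilon,\Phi_{\alpha}}(u)=\mathcal{M}_{-\alpha}\bigl(S_{\varepsilon,\Phi}(\mathcal{M}_{\alpha}u)\bigr)$ rather than $\mathcal{M}_{\alpha}\bigl(S_{\varepsilon,\Phi}(\mathcal{M}_{\alpha}u)\bigr)$, which still converges a.e.\ to $\mathcal{M}_{-\alpha}\mathcal{M}_{\alpha}u=u$, so the conclusion is unaffected.
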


Even if $\mathcal{F}_{\alpha}u$ is non-integrable, once multiplied by a smooth
cutoff, its inverse FRFT can be defined. Then, in view of theorems \ref{th:L1}
and \ref{th:pp}, we can reconstruct the original signal $u$ as a limit of the
$\Phi_{\alpha}$ means of $\mathcal{F}_{\alpha}u$ as $\varepsilon\rightarrow0$.

In engineering applications, it is necessary to calculate the discrete
fractional Fourier transform (DFRFT). It is not surprising that the numerical
implementation of the DFRFT is more complicated than that of the ordinary
discrete Fourier transform (DFT). At present, there are various types of fast
DFRFT algorithms with different processing methods and variable accuracy (cf.,
\cite{839980,757221,492554}). These form the basis for the successful
application of FRFT in signal processing. A basic point in the definition of
the discrete FRFT is its sufficient proximity to the continuous FRFT. To
recover the original signal from the fractional Fourier domain back to the
time domain, one usually tries to numerically calculate the FRFT of order
$-\alpha$ of (\ref{eq:Fu}). If we ignore the domain of the original signal in
the implementation of the algorithm, we will not be able to successfully
restore the signal. The essential reason is that (\ref{eq:Fu}) is not 
integrable.

\section{Signal encryption and decryption}

\label{sect:en}

\subsection{Encryption algorithm}

\label{subsect:en}

From the perspective of signal encryption, the difficulty of the FRFT
inversion problem of $L^{1}$ functions can be used to improve the security of
the encryption. Specifically, for a real-valued signal $u$, we first map it to
a signal $v $ which lies in $L^{1}(\mathbb{R})\backslash L^{2}(\mathbb{R})$,
and then through the fractional Fourier transform, we obtain the encrypted
signal $u^{e}=\mathcal{F}_{\alpha}v$; here and in the sequel, the superscript
$e$ indicates the encryption process and the superscript $d$ indicates the
decryption process. This kind of encryption based on FRFT usually needs to use
the inverse FRFT $\mathcal{F}_{-\alpha}$ when decrypting. However, as
mentioned above, the inverse transform $\mathcal{F}_{-\alpha}u^{e}$ does not
make sense if $u^{e}\notin L^{1}(\mathbb{R})$. This presents deciphering
complications, even if the secret key $\alpha$ is known.

The FRFT is a common and efficient tool in signal encryption. The difficulty
of deciphering can be enhanced by adding keys such as the multiple iterations
in \cite{iter}, combinations with the jigsaw transform \cite{jigsaw}, joint
transform correlators \cite{JntTranCor}, the region shift encoding
\cite{RegShEn}, chaotic maps \cite{chao}, multiple-phase codes \cite{RanPh},
etc. This paper only focuses on algorithms involving special properties of
FRFT on $L^{1}(\mathbb{R})$.

\begin{theorem}
\label{th:L1-2}For any bounded function $u$ and $\omega\in L^{1}%
(\mathbb{R})\setminus L^{2}(\mathbb{R})$ with $\omega(t)\neq0$ for
$t\in\mathbb{R}$, let $P_{\omega}u=u_{\omega}:=(u+M)\omega$, $M=1+\sup
_{t\in\mathbb{R}}\left\vert u(t)\right\vert $. Then $u_{\omega}\in
L^{1}(\mathbb{R})\setminus L^{2}(\mathbb{R})$.
\end{theorem}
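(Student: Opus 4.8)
The plan is to verify the two membership claims separately: $u_\omega\in L^1(\mathbb{R})$ and $u_\omega\notin L^2(\mathbb{R})$. Since $u$ is bounded, the key quantitative fact is that the multiplier $u+M$ is bounded above and bounded below away from zero: indeed $M=1+\sup_t|u(t)|$ gives $1\le u(t)+M\le 2M$ for all $t\in\mathbb{R}$, because $|u(t)|\le M-1$ forces $u(t)+M\ge 1$ and $u(t)+M\le 2M-1\le 2M$. This two-sided bound is what makes the whole argument go through, so I would state it first as a preliminary observation.

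For the $L^1$ membership, I would estimate pointwise: $|u_\omega(t)|=(u(t)+M)|\omega(t)|\le 2M|\omega(t)|$, and integrate to get $\|u_\omega\|_{L^1}\le 2M\|\omega\|_{L^1}<\infty$, using $\omega\in L^1(\mathbb{R})$. For the claim that $u_\omega\notin L^2(\mathbb{R})$, I would use the lower bound instead: $|u_\omega(t)|=(u(t)+M)|\omega(t)|\ge|\omega(t)|$ pointwise, hence $\|u_\omega\|_{L^2}\ge\|\omega\|_{L^2}=\infty$ since $\omega\notin L^2(\mathbb{R})$. (The hypothesis $\omega(t)\neq0$ is not strictly needed for these inequalities, but it guarantees $u_\omega$ vanishes nowhere, which matches the role $\omega$ plays in the encryption scheme and ensures $P_\omega$ is pointwise invertible.)

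There is essentially no obstacle here; the only thing to be careful about is the direction of each inequality — one uses the upper bound $u+M\le 2M$ to push $u_\omega$ into $L^1$, and the lower bound $u+M\ge 1$ to keep it out of $L^2$ — and the fact that measurability of $u_\omega$ is immediate since $u$ is bounded (hence, implicitly, measurable) and $\omega\in L^1$ is measurable, so the product is measurable. I would close by noting that this shows $P_\omega$ maps bounded functions into $L^1(\mathbb{R})\setminus L^2(\mathbb{R})$, which is exactly the ``good-to-bad'' map required in the encryption algorithm of Section~\ref{subsect:en}.
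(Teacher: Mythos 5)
Your proof is correct and follows essentially the same route as the paper: establish the two-sided bound $1\le u(t)+M\le 2M$, use the upper bound to get $\|u_\omega\|_{L^1}\le 2M\|\omega\|_{L^1}<\infty$, and the lower bound to get $\|u_\omega\|_{L^2}\ge\|\omega\|_{L^2}=\infty$. Your added remarks on measurability and on the role of the hypothesis $\omega(t)\neq 0$ (needed only for the pointwise invertibility of $P_\omega$, not for the membership claims) are accurate but not part of the paper's argument.
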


\begin{proof}
Because $1\leq u(t)+M\leq2M$, then
\begin{align*}
\int_{-\infty}^{+\infty}\left\vert u_{\omega}(t)\right\vert \text{d}x  &
=\int_{-\infty}^{+\infty}\left\vert u(t)+M\right\vert |\omega(t)|\text{d}t\\
&  \leq2M\int_{-\infty}^{+\infty}\left\vert \omega(t)\right\vert
\text{d}t<+\infty,
\end{align*}
and
\begin{align*}
\int_{-\infty}^{+\infty}\left\vert u_{\omega}(t)\right\vert ^{2}\text{d}x  &
=\int_{-\infty}^{+\infty}\left\vert u(t)+M\right\vert ^{2}|\omega
(t)|^{2}\text{d}t\\
&  \geq\int_{-\infty}^{+\infty}\left\vert \omega(t)\right\vert ^{2}%
\text{d}t=+\infty.
\end{align*}
Then, we have $u_{\omega}\in L^{1}(\mathbb{R})$ and $u_{\omega}\not \in
L^{2}(\mathbb{R})$. The desired result is proved.
\end{proof}

Consider the following examples:
\begin{equation}
\begin{aligned}
\omega_{1}(t) = & \sum\limits_{i=1}^{n}\left\vert t-\tau_{i}\right\vert
^{-1/2}\chi_{\lbrack-k,k]}(t),\\
&\tau_{i}\in[-k,k],~,k\in\mathbb{R}^{+},~i=1,2,\ldots,n,
\end{aligned}
\label{eq:omega1}%
\end{equation}%
\begin{equation}
\omega_{2}(t)=\sum_{n=1}^{\infty}\left(  \sqrt{n}\chi_{(\frac{1}{n+1},\frac
{1}{n}]}( \vert t \vert )+\frac{\chi_{(n,n+1]}(\vert t \vert )}{(n+1)^{2}}\right)  , \label{eq:omega2}%
\end{equation}
where $\chi_{\lbrack a,b]}$ denotes the characteristic function of the
interval $[a,b]$.

The function in (\ref{eq:omega1}) is a linear combination of functions of type
(\ref{eq:u}), and obviously lies in $L^{1}(\mathbb{R})\setminus L^{2}%
(\mathbb{R})$. Also,
\[
\omega_{2}(t)=\left\{
\begin{array}
[c]{cc}%
\sqrt{n}, & \frac{1}{n+1}<\left\vert t\right\vert \leq\frac{1}{n},\\
\frac{1}{(n+1)^{2}} & n<\left\vert t\right\vert \leq n+1,
\end{array}
\right.  n=1,2,\ldots.
\]
We have%
\begin{align*}
\int_{\mathbb{R}}\left\vert \omega_{2}(t)\right\vert \mathrm{d}t  &
=2\sum\limits_{n=1}^{\infty}\int_{\frac{1}{n+1}}^{\frac{1}{n}}\sqrt
{n}\,\mathrm{d}t+2\sum\limits_{n=1}^{\infty}\int_{n}^{n+1}\frac{1}{(n+1)^{2}%
}\,\mathrm{d}t\\
&  =2\sum\limits_{n=1}^{\infty}\frac{1}{\sqrt{n}(n+1)}+2\sum\limits_{n=1}%
^{\infty}\frac{1}{(n+1)^{2}}<\infty
\end{align*}
and%
\begin{align*}
\int_{\mathbb{R}}\left\vert \omega_{2}(t)\right\vert ^{2}\mathrm{d}t  &
=2\sum\limits_{n=1}^{\infty}\int_{\frac{1}{n+1}}^{\frac{1}{n}}n\,\mathrm{d}%
t+2\sum\limits_{n=1}^{\infty}\int_{n}^{n+1}\frac{1}{(n+1)^{4}}\,\mathrm{d}t\\
&  \geq2\sum\limits_{n=1}^{\infty}\frac{1}{n+1}=\infty.
\end{align*}

\begin{remark}
Denote by $Q_{\omega}u_{\omega}:=\frac{u_{\omega}}{\omega}-M$, then it is easy
to see that $Q_{\omega}u_{\omega}=u$. For example, let $u(t)=\mathrm{rect}%
(t)$, $\omega=\omega_{1}$. Then $P_{\omega}u\in L^{1}(\mathbb{R})\setminus
L^{2}(\mathbb{R})$ and $Q_{\omega}u_{\omega}=u(t)=\mathrm{rect}(t)$.
\end{remark}

Let $u$ be a bounded signal function to be encrypted. In this paper, we always
assume that $u$ is a real-valued function. First, we randomly select an
$n$-dimensional sequence $\{\tau_{i}\}\in\lbrack-k,k]$ and then pick a
function $\omega$ as in (\ref{eq:omega1}) to map $u$ to $u_{\omega}$. By
Theorem \ref{th:L1-2}, $u_{\omega}\in L^{1}(\mathbb{R})\setminus
L^{2}(\mathbb{R})$. Then, through the FRFT of order $\alpha$, we get the
encrypted signal $u^{e}=\mathcal{F}_{\alpha}u_{\omega}$. In view of the chirp
decomposition of FRFT (Equ. (\ref{eq:def})), the encryption process can be
divided into the following steps, according to the diagram of Fig.
\ref{fig:en}:

\begin{figure}[t]
\centering
\includegraphics[width=0.8\linewidth]{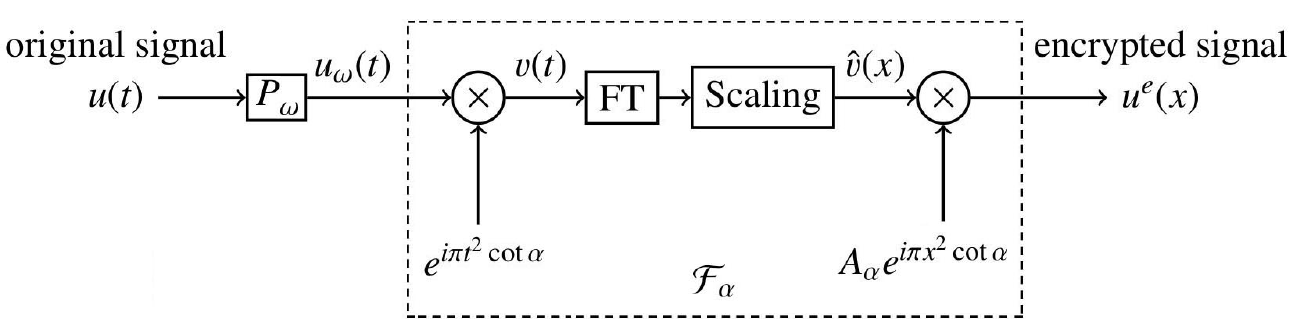}
\caption{The process of encryption algorithm.}%
\label{fig:en}%
\end{figure}

\begin{enumerate}
[(i)]

\item mapping $u$ into $L^{1}(\mathbb{R})\setminus L^{2}(\mathbb{R})$:
$u_{\omega}=P_{\omega}u$;

\item chirp modulation: $v(t)=e^{\pi it^{2}\cot\alpha}u_{\omega}(t)$;

\item Fourier transformation (with scale variation): \[\hat{v}(x)=\left(
\mathcal{F}v\right)  (x\csc\alpha);\]

\item chirp modulation: $u^{e}(x)=A_{\alpha}e^{\pi ix^{2}\cot\alpha}\hat
{v}(x)=\mathcal{F}_{\alpha}u_{\omega}$.
\end{enumerate}

In this way, we obtain the encrypted signal in the fractional Fourier domain,
which can be expressed as%
\begin{align}
u^{e}(x)&=(\mathcal{F}_{\alpha}P_{\omega}u)(x)\nonumber\\
&=\int_{-\infty}^{+\infty}K_{\alpha}(x,t) (u(t)+M)\omega(t) \,\mathrm{d}t,
\label{eq:en}%
\end{align}
where $M=1+\sup
_{t\in\mathbb{R}}\left\vert u(t)\right\vert $.

Here, the algorithm has two secret keys: the operator $P_{\omega}$ (including
sequence $\{\tau_{i}\}$) and the order $\alpha$ of FRFT. The keys here provide
high security. On the one hand, the assurance that $u^{e}(x)$ is not
integrable, provides instability in the reconstruction of the original signal
through the inverse FRFT, as mentioned above. On the other hand, it is known
that the operator $\mathcal{F}_{\alpha}$ is not continuous in the order
$\alpha$, that is, $\mathcal{F}_{\beta}$ may not map to $\mathcal{F}_{\alpha}$
as $\beta\rightarrow\alpha$. Therefore, it cannot be decrypted properly when
the order $\alpha$ is unknown.

\subsection{Decryption algorithm}

In spaces where an inverse transform is inoperative, in order to introduce the
decryption algorithm, we study inversion by adopting approximate identities.
We focus on two functions that give rise to special $\Phi_{\alpha}$ means
(Definition \ref{def:M_Phi}). Denote by
\begin{equation}
\left\{
\begin{aligned}
a_{\alpha}\left(  x\right)  &=e^{-2\pi\varepsilon\left\vert \csc\alpha
\right\vert \left\vert x\right\vert },\\
g_{\alpha}\left(x\right) & =e^{-4\pi^{2}\varepsilon x^{2}\csc^{2}\alpha}.
\end{aligned}
\right.
\label{ag}%
\end{equation}
The $\Phi_{\alpha}$ means
\[
M_{\varepsilon,a_{\alpha}}(u):=\int_{-\infty}^{+\infty}(\mathcal{F}_{\alpha
}u)(x)K_{-\alpha}(x,\cdot)a_{\alpha}(x)\mathrm{d}x
\]
and%
\[
M_{\varepsilon,g_{\alpha}}(u):=\int_{-\infty}^{+\infty}(\mathcal{F}_{\alpha
}u)(x)K_{-\alpha}(x,\cdot)g_\alpha (x)\mathrm{d}x
\]
are called the \emph{Abel mean }and\emph{ Gauss mean} of the fractional
Fourier integral of $u$, respectively. The following results are well-known.

\begin{lemma}
[\cite{StW,GL}]\label{pro:poisson}Let $\varepsilon>0$. Then

\begin{enumerate}
[(a)]

\item $\mathcal{F}\left[  e^{-2\pi\varepsilon\left\vert \,\cdot\,\right\vert
}\right]  \left(  x\right)  =\frac{1}{\pi}\frac{\varepsilon}{\varepsilon
^{2}+x^{2}}=:P_{\varepsilon}\left(  x\right)  $\quad(Poisson kernel);

\item $\mathcal{F}\left[  e^{-4\pi^{2}\varepsilon\left\vert \,\cdot
\,\right\vert ^{2}}\right]  \left(  x\right)  =\frac{1}{\left(  4\pi
\varepsilon\right)  ^{1/2}}e^{-x^{2}/4\varepsilon}=:G_{\varepsilon}\left(
x\right)  $\quad(Gauss-Weierstrass kernel).

\item $G_{\varepsilon},P_{\varepsilon}\in L^{1}(\mathbb{R})$;

\item $\int_{-\infty}^{+\infty}G_{\varepsilon}(x)\,\mathrm{d}x=\int_{-\infty
}^{+\infty}P_{\varepsilon}(x)\,\mathrm{d}x=1$.
\end{enumerate}
\end{lemma}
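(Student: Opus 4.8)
\textbf{Proof proposal for Lemma \ref{pro:poisson}.}

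The plan is to verify each of the four items essentially by direct computation, exploiting the normalization of the Fourier transform in (\ref{eq:FT}) and standard facts about Gaussian integrals. For part (a), I would compute $\mathcal{F}[e^{-2\pi\varepsilon|\cdot|}](x)=\int_{-\infty}^{+\infty}e^{-2\pi\varepsilon|t|}e^{-2\pi ixt}\,\mathrm{d}t$ by splitting the integral at $t=0$; on $(0,\infty)$ the integrand is $e^{-(2\pi\varepsilon+2\pi ix)t}$ and on $(-\infty,0)$ it is $e^{(2\pi\varepsilon-2\pi ix)t}$, each of which integrates to an elementary fraction. Adding the two pieces and putting them over a common denominator yields $\frac{1}{\pi}\frac{\varepsilon}{\varepsilon^{2}+x^{2}}$, which is the claimed Poisson kernel $P_{\varepsilon}$.

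For part (b), I would reduce the Gaussian case to the normalized identity $\mathcal{F}[e^{-\pi t^{2}}]=e^{-\pi x^{2}}$, which holds for the convention (\ref{eq:FT}) (proved, e.g., by completing the square and a contour-shift argument, or by noting that $e^{-\pi t^{2}}$ is a fixed point of $\mathcal{F}$). Writing $e^{-4\pi^{2}\varepsilon t^{2}}=e^{-\pi(2\sqrt{\pi\varepsilon}\,t)^{2}}$ and applying the dilation rule $\mathcal{F}[f(\lambda\,\cdot)](x)=\lambda^{-1}(\mathcal{F}f)(x/\lambda)$ with $\lambda=2\sqrt{\pi\varepsilon}$ gives $\mathcal{F}[e^{-4\pi^{2}\varepsilon|\cdot|^{2}}](x)=\frac{1}{2\sqrt{\pi\varepsilon}}e^{-\pi x^{2}/(4\pi\varepsilon)}=\frac{1}{(4\pi\varepsilon)^{1/2}}e^{-x^{2}/(4\varepsilon)}=G_{\varepsilon}(x)$, as claimed.

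Part (c) is then immediate: $P_{\varepsilon}$ is a constant multiple of $(\varepsilon^{2}+x^{2})^{-1}$, which is integrable on $\mathbb{R}$ (its integral is a standard arctangent computation), and $G_{\varepsilon}$ is a Gaussian, hence Schwartz and in particular in $L^{1}(\mathbb{R})$. For part (d), I would evaluate the two integrals directly: $\int_{-\infty}^{+\infty}P_{\varepsilon}(x)\,\mathrm{d}x=\frac{1}{\pi}\int_{-\infty}^{+\infty}\frac{\varepsilon}{\varepsilon^{2}+x^{2}}\,\mathrm{d}x=\frac{1}{\pi}\big[\arctan(x/\varepsilon)\big]_{-\infty}^{+\infty}=1$, and $\int_{-\infty}^{+\infty}G_{\varepsilon}(x)\,\mathrm{d}x=\frac{1}{(4\pi\varepsilon)^{1/2}}\int_{-\infty}^{+\infty}e^{-x^{2}/(4\varepsilon)}\,\mathrm{d}x=\frac{1}{(4\pi\varepsilon)^{1/2}}\cdot(4\pi\varepsilon)^{1/2}=1$ using $\int_{-\infty}^{+\infty}e^{-ax^{2}}\,\mathrm{d}x=\sqrt{\pi/a}$ with $a=1/(4\varepsilon)$. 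Alternatively, (d) follows instantly from (a), (b) and the Fourier inversion formula evaluated at $0$, since $e^{-2\pi\varepsilon|\cdot|}$ and $e^{-4\pi^{2}\varepsilon|\cdot|^{2}}$ both take the value $1$ at the origin.

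There is no real obstacle here; the statement is classical and the only mild care needed is bookkeeping of the $2\pi$ factors forced by the normalization (\ref{eq:FT}), so that the constants in $P_{\varepsilon}$ and $G_{\varepsilon}$ come out exactly as stated and the total masses are genuinely $1$. Since the lemma is quoted from \cite{StW,GL}, I would in fact keep the write-up to a brief indication of these computations rather than a full derivation.
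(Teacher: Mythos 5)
Your computations are correct and match the standard derivations in the cited references (Stein--Weiss and Grafakos); the paper itself offers no proof of this lemma, simply quoting it as classical. Your bookkeeping of the $2\pi$ factors under the normalization (\ref{eq:FT}) is right, so the constants in $P_{\varepsilon}$ and $G_{\varepsilon}$ and the unit total masses all come out as stated, and your suggestion to keep the write-up to a brief indication is exactly in the spirit of the paper.
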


Combining with Theorem \ref{th:L1} and Theorem \ref{th:pp}, we can obtain the
following results immediately.

\begin{theorem}
\label{th:p-w}The Abel and Gauss means of the fractional Fourier integral of
$u^{e}$ (the encrypted signal) converge to $u_{\omega}$ in the sense of
$L^{1}$ norm and a.e., that is,

\begin{enumerate}[(i)]

\item $\lim\limits_{\varepsilon\rightarrow0}\left\Vert \mathcal{F}_{-\alpha
}(a_\alpha u^{e})-u_{\omega}\right\Vert _{L^{1}}=0,$

$\lim\limits_{\varepsilon\rightarrow0}\left\Vert \mathcal{F}_{-\alpha
}(g_\alpha u^{e})-u_{\omega}\right\Vert
_{L^{1}}=0;$

\item $\lim\limits_{\varepsilon\rightarrow0}\mathcal{F}_{-\alpha}%
(a_\alpha u^{e})(t)=u_{\omega}(t)$ for a. e. $t\in\mathbb{R},$

$\lim\limits_{\varepsilon\rightarrow0}\mathcal{F}_{-\alpha}(g_\alpha 
u^{e})(t)=u_{\omega}(t)$ for a. e.
$t\in\mathbb{R}.$
\end{enumerate}
\end{theorem}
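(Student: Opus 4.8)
The plan is to deduce Theorem~\ref{th:p-w} directly from Theorems~\ref{th:L1} and \ref{th:pp} by verifying that the Abel means $a_\alpha$ and Gauss means $g_\alpha$ defined in \eqref{ag} are instances of the $\Phi_\alpha$ means of Definition~\ref{def:M_Phi}, applied to the signal $u_\omega = P_\omega u$. First I would identify the base profiles: set $\Phi^{\mathrm{A}}(x) = e^{-2\pi\varepsilon|x|}$ and $\Phi^{\mathrm{G}}(x) = e^{-4\pi^2\varepsilon x^2}$, so that $\Phi^{\mathrm{A}}_\alpha(x) = \Phi^{\mathrm{A}}(x\csc\alpha) = e^{-2\pi\varepsilon|\csc\alpha|\,|x|}$ and $\Phi^{\mathrm{G}}_\alpha(x) = e^{-4\pi^2\varepsilon x^2\csc^2\alpha}$, matching exactly the weights appearing in $M_{\varepsilon,a_\alpha}$ and $M_{\varepsilon,g_\alpha}$. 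Thus $M_{\varepsilon,a_\alpha}(u^e) = M_{\varepsilon,\Phi^{\mathrm{A}}_\alpha}(u^e)$ and likewise for the Gauss mean, where I am applying the $\Phi_\alpha$-mean construction to the function $u^e$ whose FRFT data we possess. Crucially, since $u^e = \mathcal{F}_\alpha u_\omega$ and $u_\omega \in L^1(\mathbb{R})$ by Theorem~\ref{th:L1-2}, the means $M_{\varepsilon,\Phi_\alpha}(u^e)$ are really the $\Phi_\alpha$ means of the fractional Fourier integral of $u_\omega$, so the target of convergence is $u_\omega$, not $u$.

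Next I would check the hypotheses of Theorems~\ref{th:L1} and \ref{th:pp} for each profile, invoking Lemma~\ref{pro:poisson}. For the $L^1$-convergence conclusion (i), I need $\Phi, \mathcal{F}\Phi \in L^1(\mathbb{R})$ and $\|\mathcal{F}\Phi\|_{L^1} = 1$. Both $\Phi^{\mathrm{A}}$ and $\Phi^{\mathrm{G}}$ are manifestly in $L^1(\mathbb{R})$ (exponentially decaying), and by parts (a), (b) of Lemma~\ref{pro:poisson} their Fourier transforms are the Poisson kernel $P_\varepsilon$ and the Gauss kernel $G_\varepsilon$ respectively, which lie in $L^1(\mathbb{R})$ by part (c) and have integral $1$ by part (d); since $P_\varepsilon, G_\varepsilon \geq 0$, the $L^1$-norm equals the integral, so $\|\mathcal{F}\Phi\|_{L^1} = 1$ in both cases. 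This gives conclusion (i) at once. For the a.e.-convergence conclusion (ii), I additionally need the majorant $\psi(x) = \sup_{|t|\geq|x|} |(\mathcal{F}\Phi)(t)|$ to be integrable; since $P_\varepsilon$ and $G_\varepsilon$ are both even and radially decreasing, one has $\psi = P_\varepsilon$ (resp. $G_\varepsilon$) itself, which is integrable, so Theorem~\ref{th:pp} applies and yields conclusion (ii).

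There is essentially no hard analytic obstacle here — the theorem is a corollary obtained by specialization — so the only thing to be careful about is bookkeeping: one must state explicitly that the convergence is to $u_\omega$ (the intermediate $L^1\setminus L^2$ signal) rather than to the plaintext $u$, and that the recovery of $u$ from $u_\omega$ is then completed by the algebraic inversion $u = Q_\omega u_\omega = u_\omega/\omega - M$ noted in the Remark after Theorem~\ref{th:L1-2}. The one point that merits a line of justification is the reduction ``$M_{\varepsilon,a_\alpha}(u^e)$ is the $a_\alpha$ mean of the fractional Fourier integral of $u_\omega$'': this is immediate from $u^e = \mathcal{F}_\alpha u_\omega$ together with the identity $\mathcal{F}_{-\alpha}$ acting on $(\mathcal{F}_\alpha u_\omega)\cdot\Phi_\alpha(\varepsilon\,\cdot)$ being exactly $M_{\varepsilon,\Phi_\alpha}(u_\omega)$ by Definition~\ref{def:M_Phi}. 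Once this identification is in place, the two displayed limits in (i) and the two in (ii) are nothing but the conclusions of Theorems~\ref{th:L1} and \ref{th:pp} applied to $u_\omega \in L^1(\mathbb{R})$ with $\Phi = \Phi^{\mathrm{A}}$ and $\Phi = \Phi^{\mathrm{G}}$.
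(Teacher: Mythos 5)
Your proposal is correct and follows essentially the same route as the paper, which derives Theorem \ref{th:p-w} ``immediately'' from Theorems \ref{th:L1} and \ref{th:pp} together with Lemma \ref{pro:poisson}; you simply make explicit the verification of the hypotheses (integrability, normalization, and the decreasing majorant for a.e.\ convergence) and the key bookkeeping point that the limit is $u_{\omega}$ rather than $u$. No gaps.
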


\begin{remark}
From the expressions of the kernel functions in (\ref{ag}), it is obvious that
the numerical implementation of function $a_{\alpha}\left(  x\right)  $ is
less demanding than that of function $g_{\alpha}\left(  x\right)  $. So in the
following encryption algorithm, we choose the function $a_{\alpha}\left(
x\right)  $ and use the fractional Abel means to approximate the original signal.
\end{remark}

Given an encrypted signal $u^{e}$, we can obtain the associated decrypted
signal $u^{d}$ by taking the Abel means of the fractional Fourier integral of
$u^{e}$ with $\varepsilon$ small enough. The decryption process may be divided
into the following steps according to the diagram of Fig. \ref{fig:de}:

\begin{figure}[t]
\centering
\includegraphics[width=0.8\linewidth]{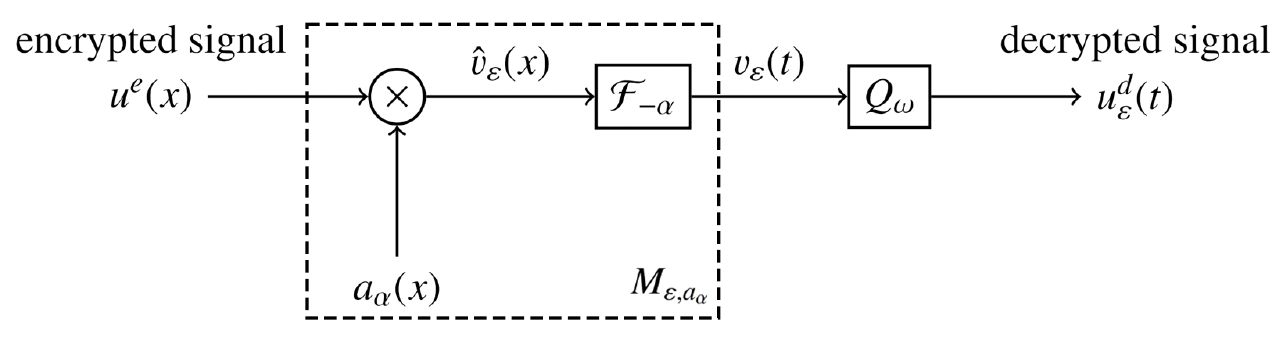}
\caption{The process of decryption algorithm.}%
\label{fig:de}%
\end{figure}

\begin{enumerate}
[(i)]

\item multiplication by the Abel function $ a_{\alpha}\left(  x\right)  $:
$\hat{v}_{\varepsilon}(x)=u^{e}(x)a_\alpha (x);$

\item inverse FRFT: $v_{\varepsilon}(t)=(\mathcal{F}_{-\alpha}\hat
{v}_{\varepsilon})(t);$

\item action by $Q_{\omega}$: $u_{\varepsilon}^{d}(t)=Q_{\omega
}v_{\varepsilon}(t)={v_{\varepsilon}(t)}/{\omega(t)}-M$.
\end{enumerate}

This process can be simply described as
\begin{align}
u_{\varepsilon}^{d}(t) &=Q_{\omega}\left[  M_{\varepsilon,a_{\alpha}}u\right](t)\nonumber\\
&=Q_{\omega}[\mathcal{F}_{-\alpha}(a_\alpha u^{e})](t).\label{eq:de}%
\end{align}

Through the aforementioned method, we can restore the encrypted signal
$u^{e}(x)$ from the fractional Fourier domain back to the time domain. Since
$u$ is a real-valued signal, the amplitude of the signal $u_{\varepsilon}^{d}$
can approximate the original signal $u$ as $\varepsilon\rightarrow0$, in view
of Theorem \ref{th:p-w} (as in Fig. \ref{fig:Ozak2}). In other words, for
$\varepsilon$ sufficiently small, the error between the decrypted signal
$u_{\varepsilon}^{d}$ and the original signal $u$ can be arbitrarily small.
Furthermore, accuracy is improved when the parameter $\varepsilon$ gets
smaller.
Replacing $u^{e}(t)$ by the ``Abel average" $u^{e}(t) a_{\alpha}\left(  t\right)  $ yields improved smoothness and results in fewer
discretization errors.

\section{Simulation examples and applications in audio encrytion}

\begin{figure}[t]
	\centering
	\includegraphics[width=\linewidth]{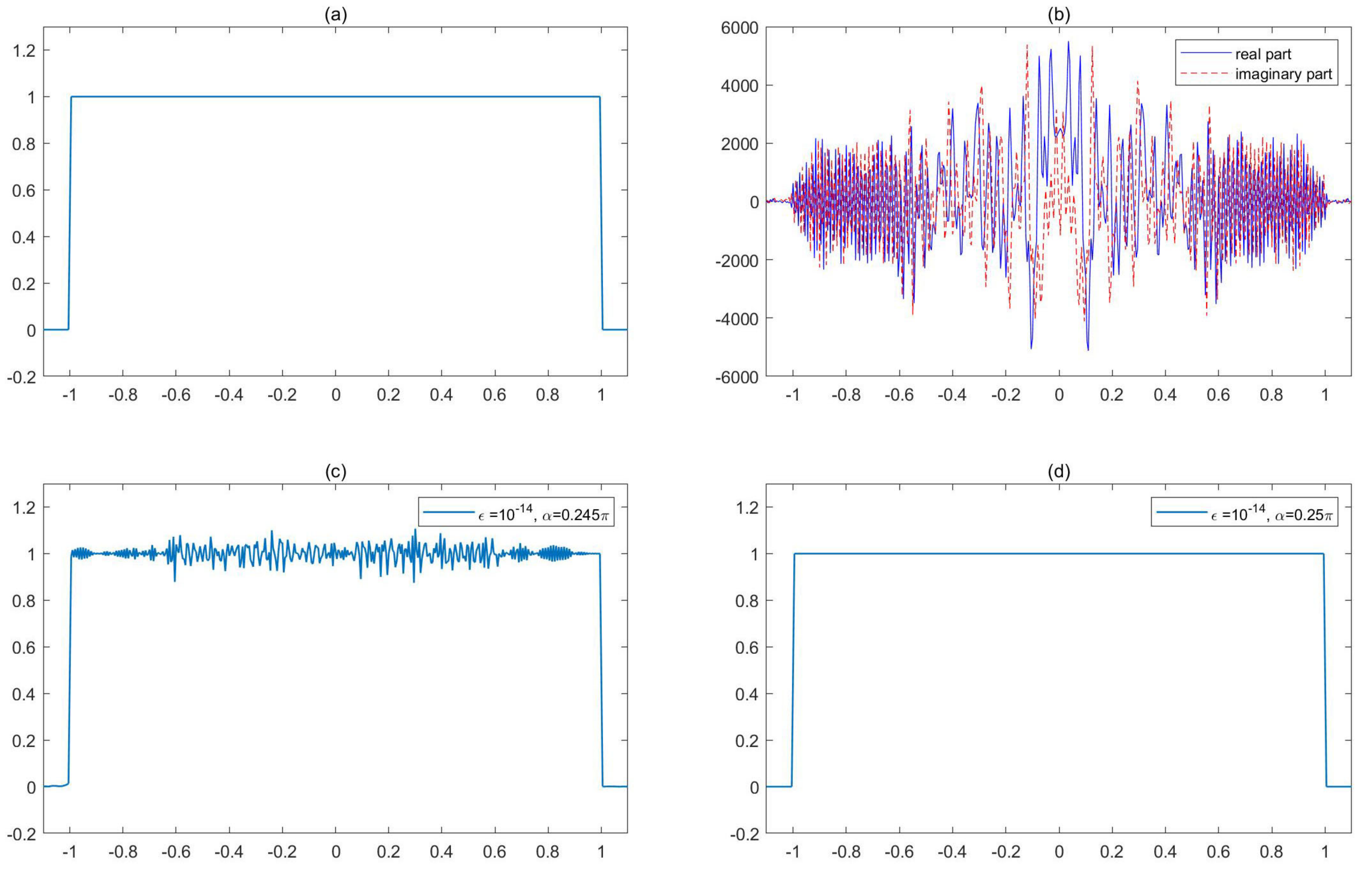}
	\caption{(a) The original signal $\mathrm{sect\;}(t)$; (b) the real and 
	imaginary part graphs of the encrypted
		signal $u^{e}(t)$; the decrypted signal $u^{d}$ with (c) incorrect key
		$\alpha= 0.245\pi$ and (d) correct key $\alpha= 0.25\pi$ for 
		$\varepsilon
		=10^{-14}$. }%
	\label{fig:rect}%
\end{figure}

\begin{figure}[t]
	\centering
	\includegraphics[width=\linewidth]{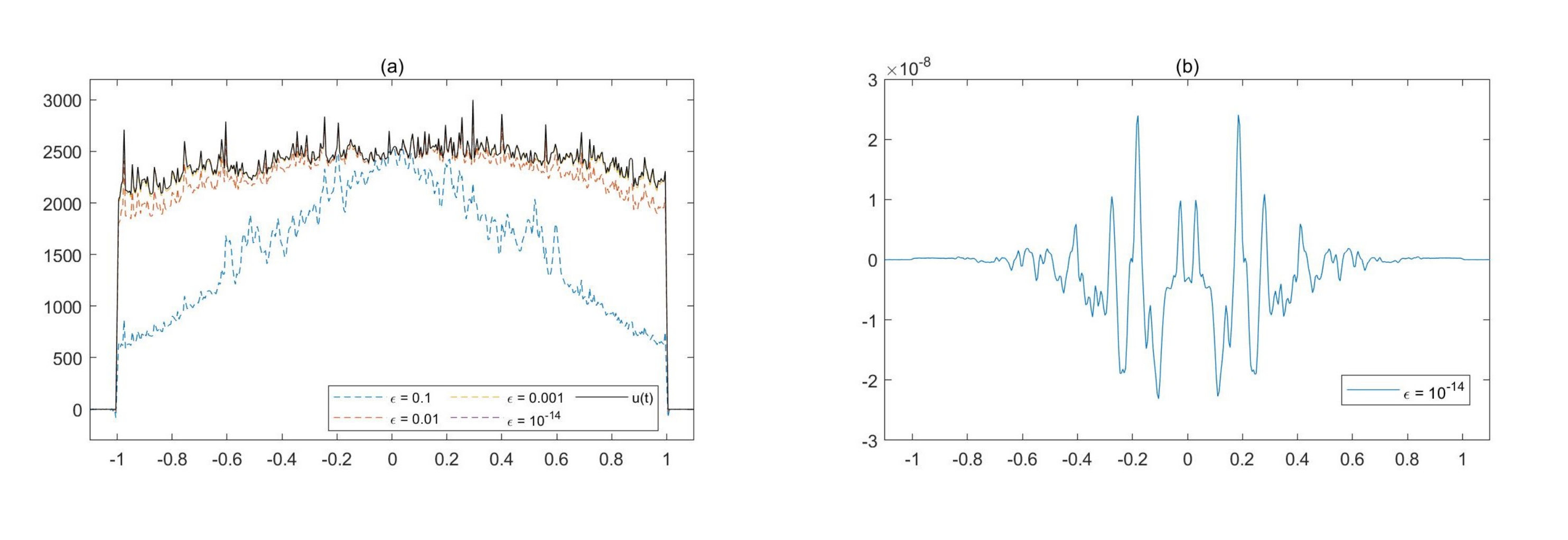}
	\caption{(a) The decrypted signal $u_{\varepsilon,\omega}(t)$ obtained by 
	Abel means with various	$\varepsilon$; (b) the error for 
	$\varepsilon=10^{-14}$.}%
	\label{fig:Ozak2}%
\end{figure}

We take a unit rectangular signal $u(t)=\mathrm{rect}(t)$ as an example. Here
we randomly select a sequence $\{\tau_{i}\}\in\lbrack-1.1,1.1]$ and let
$\omega(t)=\sum\limits_{i=1}^{n}\left\vert t-\tau_{i}\right\vert ^{-1/2}%
\chi_{\lbrack-1.1,1.1]}$ as in (\ref{eq:omega1}). Denote by
\[
u_{\omega}(t):=P_{\omega}u(t)=\sum\limits_{i=1}^{n}\left\vert t-\tau
_{i}\right\vert ^{-1/2}\mathrm{rect}(t).
\]
Then $u_{\omega}\in L^{1}(\mathbb{R})$ but $u_{\omega}\notin L^{2}%
(\mathbb{R})$. Take the fractional order $\alpha=\pi/4$ as a secret key.
Through the $\pi/4$-th FRFT of $u_{\omega}$, we get the encrypted signal
$u^{e}=\mathcal{F}_{\pi/4}u_{\omega}$ (see Fig. \ref{fig:rect} (b)) in the
fractional Fourier domain. Similar to (\ref{eq:u}), we can see that
$u^{e}\notin L^{1}(\mathbb{R})$ and the inverse FRFT%
\begin{equation}
\int_{-\infty}^{+\infty}u^{e}(x)K_{-\pi/4}(x,t)\mathrm{d}x \label{ex}%
\end{equation}
do not make sense.

In order to recover the original signal $u(t)$, we should use the
approximating method, that is, take the Abel means of the integral (\ref{ex})
\begin{equation}
u_{\varepsilon,\omega}(t)=A_{-\alpha}\int_{-\infty}^{+\infty}u^{e}%
(x)K_{-\pi/4}(x,t)a_\alpha (x) \mathrm{d}x. \label{eq:abel}%
\end{equation}
By Theorem \ref{th:p-w}, we know that $u_{\varepsilon,\omega}(t)\rightarrow
u_{\omega}(t)$ for a.e. $t\in\mathbb{R}$ as $\varepsilon\rightarrow0$, as
shown in Fig. \ref{fig:Ozak2} (a). Fig. \ref{fig:rect} (d) shows the decrypted
signal $u^{d}(t)$ obtaining by take Abel means of the integral (\ref{ex}) with
$\varepsilon=10^{-14}$. As shown in Fig. \ref{fig:Ozak2} (b), the numerical 
accuracy achieves the order of 
magnitude of $10^{-8}$.  In summary, the encryption
method proposed in this paper guarantees the security of the encryption
process regardless the way of the decryption or the security of secret keys.

Next, we applied the encryption strategy introduced above to encrypt 
the audio signal. Let's take Beethoven's famous piano music ``For Elise" (an 8 
seconds clip) as an 
example. We randomly select a sequence  $\{\tau_{i}\}\in [0,8]$ and 
$\alpha=\pi/4$ as the secret keys. Let
\[
\omega(t)=\sum\limits_{i=1}^{n}\left\vert t-\tau_{i}\right\vert ^{-1/2} 
\chi_{[0,8]}(t)
\]
 as in (\ref{eq:omega1}).  In the light of the algorithms shown in 
Fig. \ref{fig:en} and Fig. \ref{fig:de}, we get the encrypted and decrypted 
audio by using the FRFT algorithm based on FFT (Fast Fourier Transform). The 
waveform of the encrypted and decrypted audio signal ``For 
Elise" can be found in Fig. \ref{fig:AE}.

\begin{figure}[!t]
	\centering
	  \subfigure[The original audio signal $u$]{
	\includegraphics[width=\linewidth]{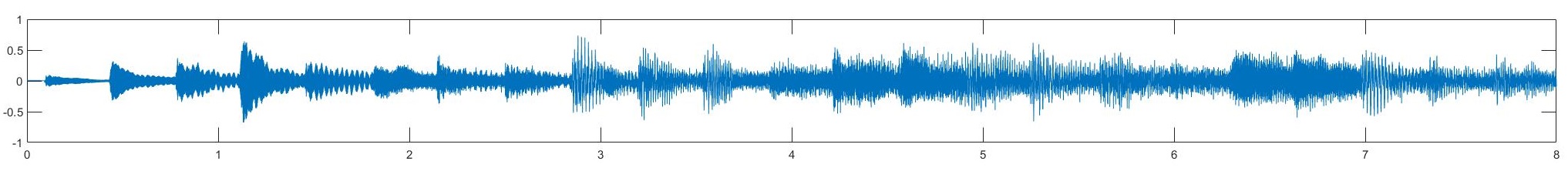}}
	  \subfigure[The single encrypted audio signal $P_\omega u$]{
	\includegraphics[width=\linewidth]{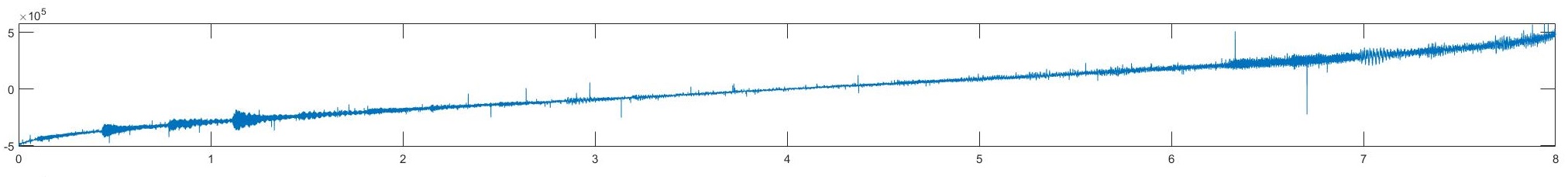}}
	  \subfigure[The double encrypted audio signal $\mathcal{F}_{\pi/4} 
	  P_\omega 
	  u$]{
	\includegraphics[width=\linewidth]{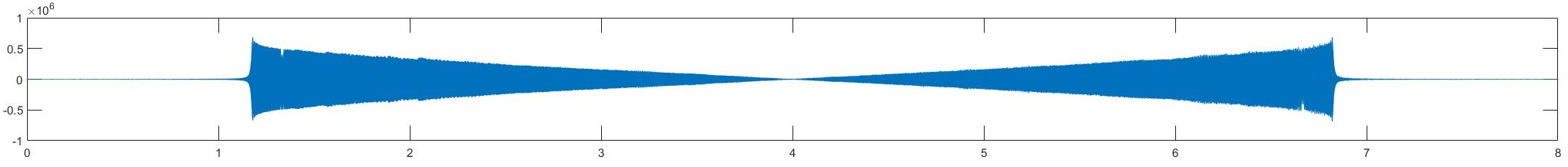}}
	  \subfigure[The decrypted audio signal with correct key $\alpha=0.25\pi$]{
	\includegraphics[width=\linewidth]{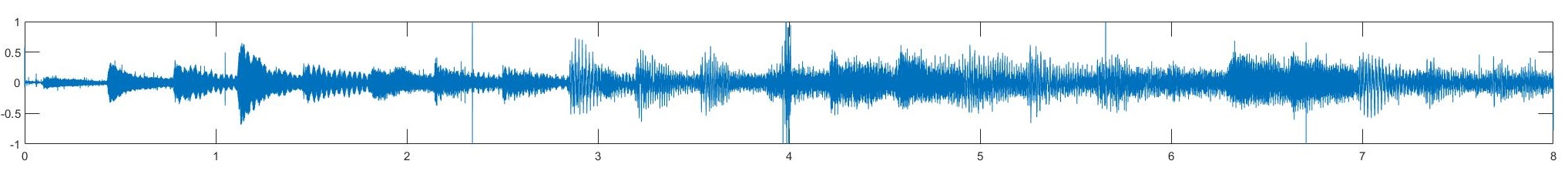}}
	  \subfigure[The decrypted audio signal with wrong key $\alpha=0.245\pi$]{
	\includegraphics[width=\linewidth]{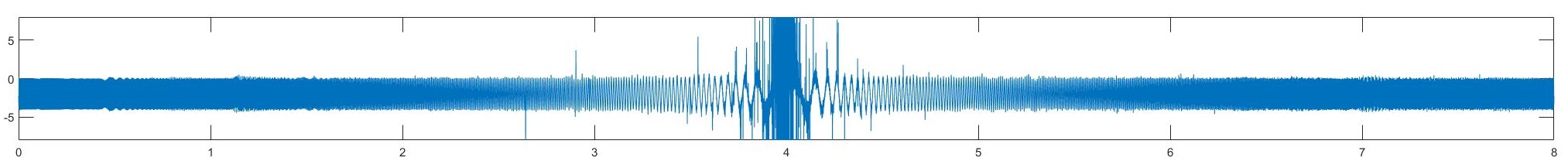}}
	\caption{Waveform of the encrypted and decrypted audio signal ``For 
	Elise".}
	\label{fig:AE}%
\end{figure}

\section{Connections with Fractional Fourier multipliers}

In signal encryption, FRFT is often used in combination with other transforms
or operators, as mentioned at the beginning of Section \ref{subsect:en}.
Fourier multipliers are operators defined by altering the Fourier transform by
multiplication; these play an important role in mathematical analysis and
signal processing. The authors' previous work \cite{CFGW} introduced Fourier
multipliers in the FRFT context. In this section, we combine these with the
encryption algorithm described in Section \ref{sect:en} to a new multiple encryption.

Let $1\leq p\leq\infty$ and $m_{\alpha}\in L^{\infty}(\mathbb{R})$. Define the
operator $T_{m_{\alpha}}$ as%
\begin{equation}
\mathcal{F}_{\alpha}\left(  T_{m_{\alpha}}f\right)  \left(  x\right)
=m_{\alpha}\left(  x\right)  \left(  \mathcal{F}_{\alpha}f\right)  \left(
x\right),\;\forall f\in L^{2}(\mathbb{R})\cap L^{p}(\mathbb{R}).
\label{def:ma}%
\end{equation}
The function $m_{\alpha}$ is called \emph{the }$L^{p}$\emph{ Fourier
multiplier of order }$\alpha$, if there exist a constant $C_{p,\alpha}>0$ such
that
\begin{equation}
\left\Vert T_{m_{\alpha}}f\right\Vert _{p}\leq C_{p,\alpha}\left\Vert
f\right\Vert _{p},\quad\forall f\in L^{2}(\mathbb{R})\cap L^{p}(\mathbb{R}).
\label{eq:ma}%
\end{equation}

\noindent As $L^{2}(\mathbb{R})\cap L^{p}(\mathbb{R})$ is dense in
$L^{p}(\mathbb{R})$, there is a unique bounded extension of $T_{m_{\alpha}}$
in $L^{p}(\mathbb{R})$ satisfying (\ref{eq:ma}). This extension is also
denoted by $T_{m_{\alpha}}$ and
\[
T_{m_{\alpha}}f=\mathcal{F}_{-\alpha}\left[  m_{\alpha}\left(  \mathcal{F}%
_{\alpha}f\right)  \right]  .
\]

Given an $L^{2}$-signal $u$ to be encrypted. Let $T_{m_{\beta}}$
be the operator associated with a fractional $L^{p}$ multipler $m_{\beta}$.
Then $T_{m_{\beta}}u\in L^{2}(\mathbb{R})$. Next, repeat the encryption
process as in Fig. \ref{fig:en} for $T_{m_{\beta}}u$ and we get the encrypted
signal $u^{e}$. According to the diagram of Fig. \ref{fig:multi}, $u^{e}$ can
be expressed as
\[
u^{e}=\mathcal{F}_{\alpha}\left[  P_{\omega}\left(  T_{m_{\beta}}u\right)
\right]  .
\]

\begin{figure}[t]
\centering
\includegraphics[width=0.5\linewidth]{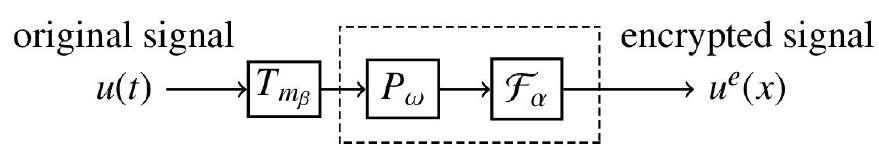}
\caption{The process of encryption algorithm}%
\label{fig:multi}%
\end{figure}

In view of Fig. \ref{fig:de}, the decryption process is shown in Fig.
\ref{fig:dmulti} and
\begin{align*}
u_{\varepsilon}^{d} &=T_{m_{\beta}}^{-1}\left(  Q_{\omega}M_{\varepsilon
,a_{\alpha}}u^{e}\right)  \\
&=\mathcal{F}_{-\beta}m_{\beta}^{-1}\mathcal{F}%
_{\beta}\left[  Q_{\omega}\mathcal{F}_{-\alpha}\left( a_\alpha (x)%
u^{e}\right)  \right]  .
\end{align*}

\begin{figure}[t]
\centering
\includegraphics[width=0.7\linewidth]{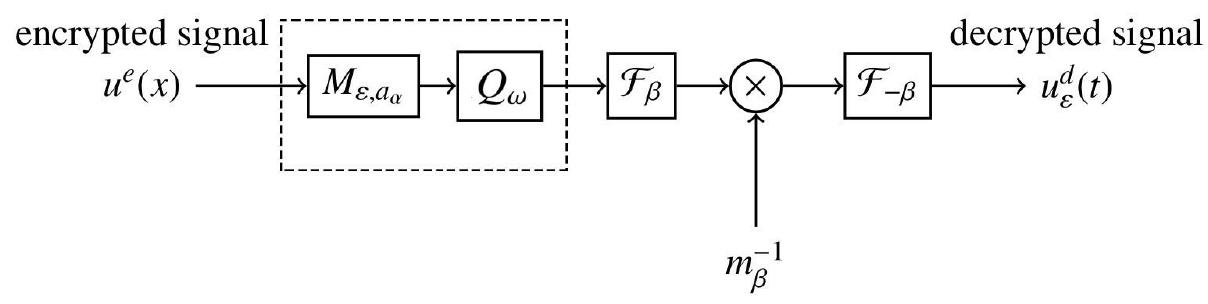}
\caption{The process of decryption algorithm}%
\label{fig:dmulti}%
\end{figure}

In view of (\ref{def:ma}), many important fractional integral operators can be
expressed in terms of fractional $L^{p}$ multiplier, for example the
fractional Hilbert transform. Recall that the classical Hilbert transform is
defined as
\begin{equation}
(\mathcal{H}u)(t) =\mathrm{p.v.~}\frac{1}{\pi}\int_{-\infty}^{+\infty}%
\frac{u(\tau)}{t-\tau}\mathrm{d}\tau. \label{eq:CHT}%
\end{equation}
The Hilbert transform of order $\beta$ is defined as (cf., \cite{zayed})%
\begin{equation}
(\mathcal{H}_{\beta}u)( t ) =\mathrm{p.v.~}\frac{e^{-i\pi t^{2}\cot\beta}}%
{\pi}\int_{-\infty}^{+\infty}\frac{u(\tau)e^{i\pi\tau^{2}\cot\beta}}{t-\tau
}\mathrm{d}\tau. \label{eq:HT}%
\end{equation}
For $1<p<\infty$, the operator $\mathcal{H}_{\beta}$ is bounded from
$L^{p}(\mathbb{R})$ to $L^{p}(\mathbb{R})$. By \cite[Theorem 4]{zayed}, we see
that
\[m_{\beta}=-i\mathrm{sgn}\left(  (\pi-\beta)\omega^{\prime}\right)
\]
 is a fractional $L^{p}$ multiplier and the associated operator $T_{m_{\beta}}$ is
the fractional Hilbert transform, that is,%
\begin{equation}
(\mathcal{F}_{\beta}\mathcal{H}_{\beta}u)\left(  \omega^{\prime}\right)
=-i\mathrm{sgn}\left(  (\pi-\beta)\omega^{\prime}\right)  \left(
\mathcal{F}_{\beta}u\right)  \left(  \omega^{\prime}\right)  . \label{eq:mHT}%
\end{equation}
Without loss of generality, assume that $\beta\in(0,\pi)$. It can be seen from
(\ref{eq:mHT}) that the Hilbert transform of order $\beta$ is a phase-shift
converter that multiplies the positive portion in $\beta$-th fractional
Fourier domain of signal $u$ by $-i$, that is, maintaining the same amplitude,
shifts the phase by $-\pi/2$, while the negative portion of $\mathcal{F}%
_{\beta}u$ is shifted by $\pi/2$. As shown in Fig. \ref{fig:HT}.

\begin{figure}[!thb]
  \centering
  \subfigure[]{
  \begin{tikzpicture}[>=stealth,thick,scale=0.7]			
	\draw [->] (0,-3,0) -- (0,3,0) node [at end, above] {Im $U$};
    \draw [->] (0,0,-3) -- (0,0,3) node [at end, below] {Re $U$};
	\draw [fill=blue!20,draw=blue!70!red]  (1,0,0) -- (1,0,1.5) -- 
	(2.5,0,1.5)--(2.5,0,0);
	\draw [draw=blue!50!green,fill=blue!50!green!20] (-1,0,0) -- (-1,0,1.5) -- 
	(-2.5,0,1.5)--(-2.5,0,0);
	\draw [->] (-3,0,0) -- (3,0,0) node [at end, right] {$\omega'$};
  \end{tikzpicture}
  }
  \subfigure[]{
  \begin{tikzpicture}[>=stealth,thick,scale=0.7]
	\draw [->] (0,-3,0) -- (0,3,0) node [at end, above] {Im $V$};
	\draw [->] (0,0,-3) -- (0,0,3) node [at end, below] {Re $V$};
	\draw [fill=blue!20,draw=blue!70!red,rotate around x=90]  (1,0,0) -- 
	(1,0,1) -- (2.5,0,1)--(2.5,0,0);
	\draw [draw=blue!50!green,fill=blue!50!green!20,rotate around x=-90] 
	(-1,0,0) -- (-1,0,1) -- (-2.5,0,1)--(-2.5,0,0);
	\draw [->] (-3,0,0) -- (3,0,0) node [at end, right]{$\omega'$};
  \end{tikzpicture}
  }
  \subfigure[]{
  \begin{tikzpicture}[>=stealth,thick,scale=0.7]
	\draw[->](-3,0)--(3,0) node[right]{$t$};
	\draw[->](0,-3)--(0,3)node[above]{$\omega$};
	\draw[->,rotate=60,blue](-3,0)--(3,0) node[right]{$\omega'$};
	\draw[->,thin,red] (1.5,0) arc (0:60:1.5);
	\draw[->,thin] (0.6,0) arc (0:90:0.6);
	\node[right] at 	
	(1.2,0.9){\footnotesize$\textcolor{red}{\mathcal{F}_\beta}$};
	\node[right] at (0.4,0.4){\footnotesize$\mathcal{F}_{\frac{\pi}{2}}$};
  \end{tikzpicture}
  }
\caption{Phase-shifting effect of the $\beta$th-order Hilbert transform (
	(a) the original signal: $U=(\mathcal{F}_{\beta}u)(\omega^{\prime})$; 
	(b) after Hilbert transform of order $\beta$: 
	$V=(\mathcal{F}_{\beta}\mathcal{H}_{\beta}(u)(\omega^{\prime})$;
	(c) rotation of the time-frequency plane).}%
\label{fig:HT}%
\end{figure}
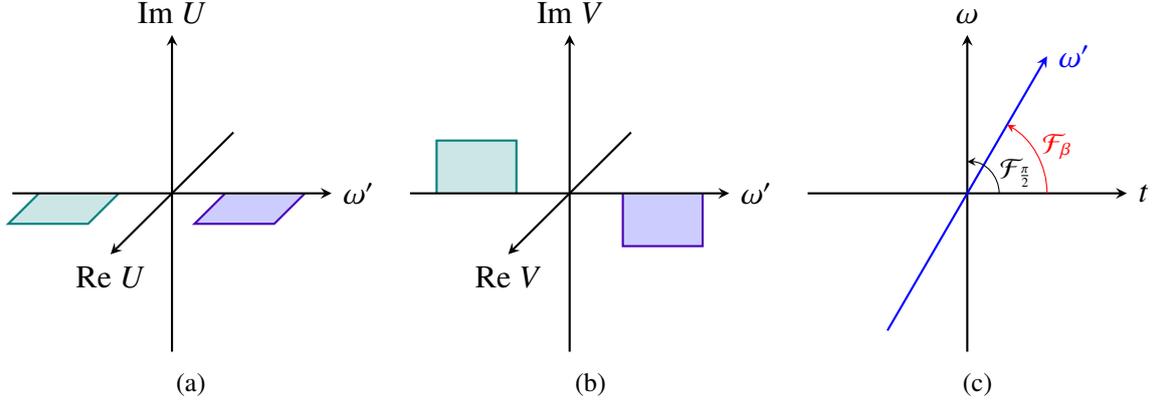

Take $T_{m_{\beta}}=\mathcal{H}_{\beta}$ and $\omega$ as in (\ref{eq:omega1})
as an example. The encryption process shown in Fig. \ref{fig:multi} can be
divided into the following steps:

\begin{enumerate}
[(i)]

\item phase shifting, $u_{1}=\mathcal{H}_{\beta}u$: shifting the phase of
positive portion of the original signal $u$ in $\beta$-th Fourier domain by
$-\pi/2$, while shifting the phase of negative portion of $\mathcal{F}_{\beta
}u$ by $\pi/2$;

\item adjusting amplitude, $u_{2}=P_{\omega}u_{1}$: enlarging the amplitude
nearby $\tau_{i}$, $i=1,2,\ldots,n$;

\item rotation of time-frequency plane, $u^{e}=\mathcal{F}_{\alpha}u_{2}$:
counterclockwise rotation with an angle of $\alpha$ of signal $u_{2}$ from the
time axis to $x$-axis around the origin in the time-frequency plane.
\end{enumerate}

In this way, we triply encrypt the signal. If any one of the keys is
erroneous, the original signal cannot be reconstructed. In addition, there are
various other options for $m_{\alpha}$ and $T_{m_{\alpha}}$, such as
fractional Poisson integral operator, fractional Gauss-Weierstrass integral
operator and so on.

\section{Conclusion}

FRFT is a powerful tool widely used in signal processing. In this work, we
elaborated the impact provided by altering the domain of the signals. We
pointed out that the FRFT of an $L^{1}$-signal is usually noninvertible. Then
we provided a double encryption algorithm based on the different properties of
FRFT in $L^{1}(\mathbb{R})$ and $L^{2}(\mathbb{R})$ spaces. We mapped the
to-be-encrypted signal to a ``bad" $L^{1}$-signal and obtained an encrypted
signal by applying the FRFT. Here, the two keys are the function $\omega$
and the order $\alpha$. With the help of Abel and Gauss means of FRFT, we used
the idea of identity approximation to recover the encrypted signal from the
fractional Fourier domain back to the time domain. 

On the one hand, according 
to the encryption algorithm in this paper, the original signal $u$ is 
multiplied by the 
secret function $\omega$, and then the FRFT is performed.
It is known from the convolution theorem (see \cite{zay982}) that
\[
\mathcal{F}_\alpha(u \omega)(x)=A_{-\alpha}e^{i \pi  x^2 \cot\alpha}\left( 
U_\alpha \ast \Omega_{\alpha} \right)(x),
\]
where $U_\alpha (t)= e^{-i \pi  t^2 \cot\alpha} \mathcal F_\alpha (u)(t)$, 
$\Omega_\alpha (t)=  e^{-i \pi  t^2 \cot\alpha} \mathcal F_\alpha (\omega 
e^{-i \pi (\cdot)^2 \cot\alpha})(t)$.
Namely, the product of a signal in the time domain becomes a convolution in  
fractional Fourier domain. Obviously, the convolution operation greatly 
increases the 
difficulty for the decipher to separate $u$ from $\omega$. 
Further,  by the Heisenberg's uncertainty principle (see \cite{Shin} and 
\cite[pp. 158]{StR}), it is impossible for a signal to have compact support in 
both time domain and fractional Fourier domain.
A function and its FRFT cannot both be essentially localized. Somewhat more 
precisely, if the ``preponderance" of the mass of a function is concentrated in 
an interval of finite length, then the preponderance of the mass of its FRFT 
cannot lie in an interval of finite length.
This means that the decipher cannot obtain all the characteristics of the 
original signal $u$ and the secret function  $\omega$ from the encrypted signal 
$u^e$,  which increases the difficulty of deciphering. 
On the other hand, using the FRFT $L^{1}$ theory established in \cite{CFGW}, 
the encryption algorithm in this paper not only
further improves security, but also ensures the feasibility and accuracy of
decryption. This can be shown in the simulation examples and applications in 
audio encrytion. Finally we studied the general idea of signal encryption 
combined
with fractional Fourier multipliers, and looked at the fractional Hilbert
transform as an example.

For the sake of simplicity, this paper only studied the related problems of
one-dimensional signals. In fact, using a similar idea, one may establish the
$L^{1}$ identity approximation theory of high-dimensional FRFT and apply it to
image encryption. The idea in  Theorem \ref{th:L1-2},
 combined with \cite[Theorem 1.12]{Duo2001},
 seems to provide a new design of an encryption algorithm even in the case of
  the classical Fourier transform.
  
\section*{Acknowledgments}
This work was partially supported by the National Natural Science Foundation of 
China (Nos. 12071197, 11701251 and 11771195), the Natural Science Foundation of 
Shandong Province (Nos. ZR2017BA015 and ZR2019YQ04),  a Simons Foundation 
Fellows Award (No. 819503) and a Simons Foundation Grant (No. 624733).

\bibliography{ref}

\end{document}